\tikzset{configuration/.style = {state, rectangle, minimum height=0.5cm},
}
\let\emptyset\varnothing
\newcommand{\B}{\mathbb{B}}
\newcommand{\N}{\mathbb{N}}
\newcommand{\dom}{\mathrm{dom}}
\newcommand{\NP}{$\text{NP}$}
\newcommand{\coNP}{$\text{coNP}$}
\newcommand{\NPcoNP}{$\text{NP}^\text{coNP}$}
\newcommand{\FNP}{$\text{FNP}$}
\newcommand{\FcoNP}{$\text{FcoNP}$}
\newcommand{\FNPcoNP}{$\text{FNP}^\text{coNP}$}
\newcommand{\problemtitle}[1]{\gdef\@problemtitle{\hspace*{-3.5mm}\scalebox{.75}{$\blacktriangleright$} #1}}% Store problem title
\newcommand{\probleminput}[1]{\gdef\@probleminput{#1}}% Store problem input
\newcommand{\problemquestion}[1]{\gdef\@problemquestion{#1}}% Store problem question
  \par\addvspace{.5\baselineskip}
  \par\addvspace{.5\baselineskip}
\newcommand{\fproblemtitle}[1]{\gdef\@fproblemtitle{\hspace*{-3.5mm}\scalebox{.75}{$\blacktriangleright$} #1}}% Store problem title
\newcommand{\fprobleminput}[1]{\gdef\@fprobleminput{#1}}% Store problem input
\newcommand{\fproblemoutput}[1]{\gdef\@fproblemoutput{#1}}% Store problem output
  \par\addvspace{.5\baselineskip}
  \par\addvspace{.5\baselineskip}
\newcommand{\fpproblemtitle}[1]{\gdef\@fpproblemtitle{\hspace*{-3.5mm}\scalebox{.75}{$\blacktriangleright$} #1}}% Store problem title
\newcommand{\fpprobleminput}[1]{\gdef\@fpprobleminput{#1}}% Store problem input
\newcommand{\fpproblempromise}[1]{\gdef\@fpproblempromise{#1}}% Store problem promise
\newcommand{\fpproblemoutput}[1]{\gdef\@fpproblemoutput{#1}}% Store problem output
  \par\addvspace{.5\baselineskip}
  \par\addvspace{.5\baselineskip}
\newtheorem{example}{Example}
\newtheorem{theorem}{Theorem}
\newtheorem{claim}{Claim}
\begin{document}
\title{Optimising attractor computation in Boolean automata networks}
\author[1,2]{K{\'e}vin Perrot}
\author[1]{Pac{\^o}me Perrotin}
\author[1]{Sylvain Sen{\'e}}
\affil[1]{Aix Marseille Univ., Univ. de Toulon, CNRS, LIS, France.} %, UMR 7020, Marseille
\affil[2]{Univ. C{\^o}te d'Azur, CNRS, I3S, France.} %, UMR 7271, Sophia Antipolis
\date{}

\maketitle

\begin{abstract}
This paper details a method for optimising the size of Boolean
automata networks in order to compute
their attractors under the parallel update schedule. This method relies
on the formalism of modules introduced recently that allows for (de)composing
such networks.
We discuss the practicality of this method by exploring examples. We also
propose results that nail the complexity of most parts of the process, while
the complexity of one part of the problem is left open.
\end{abstract}

\section{Introduction}

Boolean automata networks (BANs) are studied for their capacity to succintly
expose the complexity that comes with the composition of simple entities into
a network. They belong to a wide family of systems which include
cellular automata and neural networks, and can be described as cellular
automata with arbitrary functions and on arbitrary graph structures.

Understanding and predicting the dynamics of computing with BANs has been a
focus of the scientific community which studies them, in particular since
their applications include the modelling of gene regulatory
networks~\cite{J-Kauffman1969,J-Thomas1973,J-Mendoza1998,J-Davidich2008,J-Demongeot2010}.
In those applications, fixed points of a BAN are often viewed as
cellular types and limit cycles as biological
rhythms~\cite{J-Kauffman1969,J-Thomas1973}.
It follows that most biological studies
relying on BANs require the complete computation of their dynamics to propose
conclusions.
The complete computation of the dynamics of BANs is an exponentially costly
process. Indeed, for $n$ the size of a BAN, the size of its dynamics
is precisely $2^n$. The dynamics of a BAN is usually partitionned in two sorts
of configurations: the recurring ones that are parts of attractors and
either belong to a limit cycle or are fixed points; the others that evolve
towards these attractors and belong to their attraction basins.
The questions of characterising, computing or counting those attractors from a
simple description of the
network have been
explored~\cite{C-Orponen1989,J-Aracena2008,J-Goles2008,J-Demongeot2012,C-Noual2012,J-Aracena2017},
and have been shown to be difficult
problems~\cite{C-Orponen1989,C-Orponen1992,C-Bridoux2019,U-Bridoux2020,C-Nous2020}.

In this paper, we propose a new method for computing the
attractors of a BAN under the parallel update schedule. For any input network,
this method generates another network which is possibly smaller and which is
guaranteed to possess attractors isomorphic to those of the input network.
Computing
the dynamics of this smaller network therefore takes as much time as needed
to compute the dynamics of the input networks, divided by some power of two.

This method uses tools and results developed in previous works by the
authors~\cite{C-Perrot2018,C-Perrot2020}.
These works involve adding inputs to BANs, in a generalisation called modules
that proposes in some cases the study of the computationnal capabilities of the
network as the computation in terms of the inputs.
In particular, a result states that two networks that have equivalent such
computations share isomorphic attractors.

Section~\ref{s:Definitions} starts by exposing all the definitions
needed to read this paper.
Section~\ref{s:PartA} explores the question of obtaining an acyclic module
(AM) from
a BAN. Section~\ref{s:PartB} explains how to extract so called output functions
from a module. Section~\ref{s:PartC} details how to generate a minimal module
from a set of output functions. Finally Section~\ref{s:PartD} shows the final
step of the method, which implies constructing a BAN out of an AM
and computing its dynamics. Each section explores complexity results
of the different parts of the process, and details examples along the way.
An illustrative outline of the paper can be found in Figure~\ref{f:Plan}.

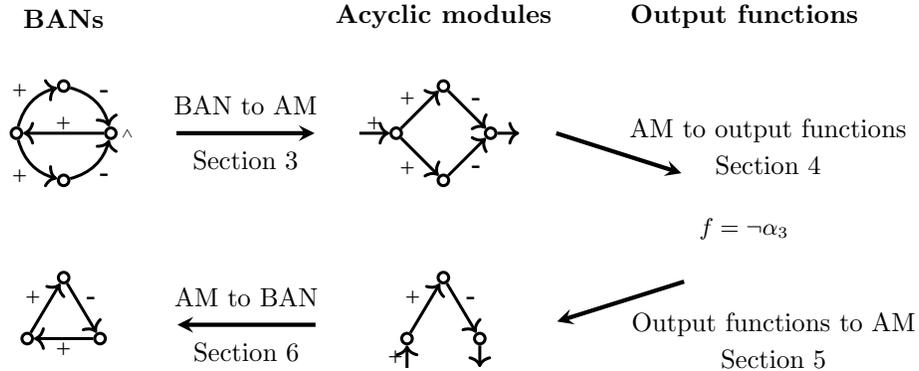
\begin{figure}[t!]
\label{f:Plan}
\begin{center}
\begin{tikzpicture} [->, line width=0.01mm,auto,node distance=0.4cm,line width=0.4mm]

    %CENTER
        \node (CENTER) at (0, 0) {};

	%BAN1

        \node (BANCENTER) [left =4cm of CENTER] {};

        \node (BAN1CENTER) [above =1cm of BANCENTER] {};

		\node[state,inner sep=1pt,minimum size=4pt] (BAN1A) [left =of BAN1CENTER] {};
		\node[state,inner sep=1pt,minimum size=4pt] (BAN1B) [above =of BAN1CENTER] {};
		\node[state,inner sep=1pt,minimum size=4pt] (BAN1C) [below =of BAN1CENTER] {};
		\node[state,inner sep=1pt,minimum size=4pt] (BAN1D) [right =of BAN1CENTER] {};

		\path (BAN1A) edge [bend left] node[above left=-3pt] {\tiny \textbf{+}} (BAN1B);
		\path (BAN1A) edge [bend right] node[below left=-3pt] {\tiny \textbf{+}} (BAN1C);
		\path (BAN1B) edge [bend left] node[above right=-3pt] {\small \textbf{-}} (BAN1D);
		\path (BAN1C) edge [bend right] node[below right=-3pt] {\small \textbf{-}} (BAN1D);
		\path (BAN1D) edge node[above=-2.5pt] {\tiny \textbf{+}} (BAN1A);

		\node [right of=BAN1D, xshift=-1.8mm] {\tiny \textbf{$\wedge$}};

	%BAN2

        \node (BAN2CENTER) [below =1cm of BANCENTER] {};

		\node[state,inner sep=1pt,minimum size=4pt] (BAN2A)
          [below left =of BAN2CENTER, yshift =0.3cm] {};
		\node[state,inner sep=1pt,minimum size=4pt] (BAN2B) [above =of BAN2CENTER] {};
		\node[state,inner sep=1pt,minimum size=4pt] (BAN2C)
          [below right =of BAN2CENTER, yshift =0.3cm] {};

		\path (BAN2A) edge node[above left=-3pt] {\tiny \textbf{+}} (BAN2B);
		\path (BAN2B) edge node[above right=-3pt] {\small \textbf{-}} (BAN2C);
		\path (BAN2C) edge node[below=-2.5pt] {\tiny \textbf{+}} (BAN2A);

	\node [above =2.4cm of BANCENTER] {\textbf{BANs}};

	%MOD1

        \node (MODCENTER) [right = 0.5cm of CENTER] {};

        \node (MOD1CENTER) [above =1cm of MODCENTER] {};

		\node[state,inner sep=1pt,minimum size=4pt] (MOD1A) [left =of MOD1CENTER] {};
		\node[state,inner sep=1pt,minimum size=4pt] (MOD1B) [above =of MOD1CENTER] {};
		\node[state,inner sep=1pt,minimum size=4pt] (MOD1C) [below =of MOD1CENTER] {};
		\node[state,inner sep=1pt,minimum size=4pt] (MOD1D) [right =of MOD1CENTER] {};

		\path (MOD1A) edge node[above left=-3pt] {\tiny \textbf{+}} (MOD1B);
		\path (MOD1A) edge node[below left=-3pt] {\tiny \textbf{+}} (MOD1C);
		\path (MOD1B) edge node[above right=-3pt] {\small \textbf{-}} (MOD1D);
		\path (MOD1C) edge node[below right=-3pt] {\small \textbf{-}} (MOD1D);
		\path node[left =of MOD1A] {} edge node[above=-2.5pt] {\tiny \textbf{+}} (MOD1A);
		\path [draw] (MOD1D) -- ++(0.38, 0);

	%MOD2

        \node (MOD2CENTER) [below =1cm of MODCENTER] {};

		\node[state,inner sep=1pt,minimum size=4pt] (MOD2A)
          [below left =of MOD2CENTER, yshift =0.3cm] {};
		\node[state,inner sep=1pt,minimum size=4pt] (MOD2B) [above =of MOD2CENTER] {};
		\node[state,inner sep=1pt,minimum size=4pt] (MOD2C)
          [below right =of MOD2CENTER, yshift =0.3cm] {};

		\path (MOD2A) edge node[above left=-3pt] {\tiny \textbf{+}} (MOD2B);
		\path (MOD2B) edge node[above right=-3pt] {\small \textbf{-}} (MOD2C);
		\path node[below =of MOD2A, yshift=1mm] {} edge node[left=-2.5pt] {\tiny \textbf{+}} (MOD2A);
		\path [draw] (MOD2C) -- ++(0, -0.38);

	\node [above =2.4cm of MODCENTER] {\textbf{Acyclic modules}};

	%FUNC

        \node (FUNCENTER) [right = 4.5cm of CENTER] {};

		\node (FUNC) at (FUNCENTER) {\small $f = \neg \alpha_3$};

	\node [above =2.4cm of FUNCENTER] {\textbf{Output functions}};

  %PATHS

    \node (PARTATIP) [left =1.3cm of MOD1CENTER] {};
	\draw [>=stealth, ultra thick, draw]
		(BAN1CENTER) ++ (1.5cm, 0) --
        node (partAtag) [above=0.1 cm] {BAN to AM}
        (PARTATIP) {};
    \node [below=0.2 cm of partAtag] {Section~\ref{s:PartA}};

    \node (PARTBTIP) [above left =0.6cm of FUNCENTER] {};
	\draw [>=stealth, ultra thick, draw]
		(MOD1CENTER) ++ (1.5cm, 0) --
        node (partBtag) [above right=0 cm] {AM to output functions}
        (PARTBTIP) {};
    \node [below=-0.1 cm of partBtag] {Section~\ref{s:PartB}};

    \node (PARTCTIP) [right =1.1cm of MOD2CENTER] {};
	\draw [>=stealth, ultra thick, draw]
		(FUNCENTER) ++ (-0.8cm, -0.7cm) --
        node (partCtag) [below right=0 cm] {Output functions to AM}
        (PARTCTIP) {};
    \node [below=-0.1 cm of partCtag] {Section~\ref{s:PartC}};

    \node (PARTDTIP) [right =1.1cm of BAN2CENTER] {};
	\draw [>=stealth, ultra thick, draw]
		(MOD2CENTER) ++ (-1.7cm, 0) --
        node (partDtag) [above=0.1 cm] {AM to BAN}
        (PARTDTIP) {};
    \node [below=0.2 cm of partDtag] {Section~\ref{s:PartD}};
		
\end{tikzpicture}
\end{center}
\caption{
Illustration of the optimisation pipeline explored in this paper. Each arrow
corresponds to a part of the pipeline, and a section in this article.
}
\end{figure}

% doit contenir une figure avec le plan du papier sous la forme du dessin
% de la pipeline. Avec des liens hypertextes pour les différentes sections
% directement dans la figure tex.

\section{Definitions} \label{s:Definitions}

\subsection{Boolean functions}

In this paper, we consider a Boolean function as any function
$f : \B^A \to \B$, for $A$ a finite set. An affectation $x$ of $f$ is a vector
in $\B^A$.
%We consider two different encodings
%of Boolean functions.
When considered as the input or output of a complexity problem, we encode
Boolean functions as Boolean circuits.
%The \emph{lookup table} of $f$ is a collection of every possible
%affectation $x$, associated with its evaluation $f(x)$. This object has size
%$2^n$, for $n = |A|$.
A \emph{Boolean circuit} of $f$ is an acyclic digraph in which nodes without
incoming edges are
labelled by an element in $A$, and every other node by a Boolean gate in
$\{\wedge, \vee, \neg\}$, with a special node marked as the output of the circuit.
The evaluation $f(x)$ is
computed by mapping $x$ to the input nodes of the circuit, and propagating
the evaluation along the circuit using the gates until the output node is
reached.

%\emph{Boolean formulae} are a restriction of Boolean circuits on trees. They
%are often represented as propositional formulae instead of graphs.

\subsection{Boolean automata networks and acyclic modules}

\subsubsection{Boolean automata networks}

BANs are composed of a set $S$ of automata.
Each automaton in $S$, or node, is at any time in a state in $\B$.
Gathering those isolated states into a vector of dimension $|S|$ provides us 
with a configuration of the network. 
More formally, a \emph{configuration} of $S$ over $\B$ is a vector in 
$\B^S$.
The state of every automaton is bound to evolve as a function of the
configuration of the entire network. Each node has a unique function, called
a local function, that is predefined and does not change over time. 
A \emph{local function} is thus a function $f$ defined as $f: \B^S \to 
\B$.
Formally, a BAN $F$ is a set that assigns a local
function $f_s$ over $S$ for every $s \in S$.

\begin{example}
\label{ex1-BAN-def}
Let $S_A = \{a, b, c, d\}$. Let $F_A$ be the BAN defined by
$f_a(x) = x_d$, $f_b(x) = f_c(x) = x_a$, and
$f_d(x) = \neg x_b \vee \neg x_c$.
The interaction digraph of this BAN is depicted in Figure~\ref{f:ex:BAN}
(left panel).
\end{example}

\begin{example}
\label{ex2-BAN-def}
Let $S_B = \{St, Sl, Sk, Pp, Ru, S9, C, C25, M, C^*\}$. Let $F_B$ be the BAN
defined by
$f_{St}(x) = \neg x_{St}$,
$f_{Sl}(x) = \neg x_{Sl} \vee x_{C^*}$,
$f_{Sk}(x) = x_{St} \vee \neg x_{Sk}$,
$f_{Pp}(x) = x_{Sl} \vee \neg x_{Pp}$,
$f_{Ru}(x) = f_{S9}(x) = \neg x_{Sk} \vee x_{Pp} \vee \neg x_C \vee \neg x_{C^*}$,
$f_C(x) = \neg x_{Ru} \vee \neg x_{S9} \vee \neg x_{Sl}$,
$f_{C25}(x) = \neg x_{Pp} \vee x_C$,
$f_{M}(x) = x_{Pp} \vee \neg x_C$,
and $f_{C^*}(x) = \neg x_{Ru} \vee \neg x_{S9} \vee x_{C25} \vee \neg x_M$.
The interaction digraph of this BAN is depicted in Figure~\ref{f:ex:BAN}
(right panel).
\end{example}

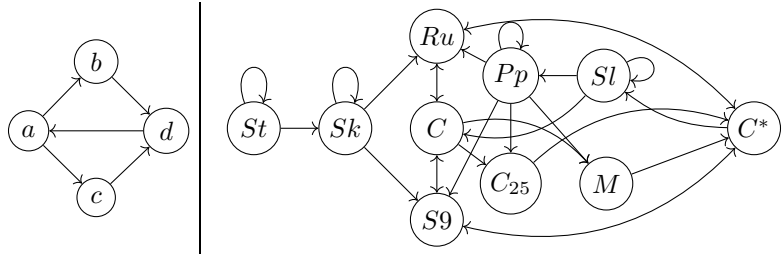
\begin{figure}[t!]
\begin{center}
\begin{minipage}{.2\textwidth}
\begin{tikzpicture} [->, node distance=0.5cm]

    %CENTER
        \node (CENTER) at (0, 0) {};

    %BAN
        \node[state, inner sep=3pt,minimum size=0] (A) [left = of CENTER] {$a$};
        \node[state, inner sep=3pt,minimum size=0] (B) [above= of CENTER] {$b$};
        \node[state, inner sep=3pt,minimum size=0] (C) [below= of CENTER] {$c$};
        \node[state, inner sep=3pt,minimum size=0] (D) [right= of CENTER] {$d$};

        \path (A) edge (B) edge (C)
              (B) edge (D)
              (C) edge (D)
              (D) edge (A);
		
\end{tikzpicture}
\end{minipage}
\vline\quad
\begin{minipage}{.60\textwidth}
\begin{tikzpicture} [->, node distance=0.5cm]

    %CENTER
        \node (CENTER) at (0, 0) {};

    %BAN
        \node[state, inner sep=2pt,minimum size=20pt] (C) [left = of CENTER] {$C$};
        \node[state, inner sep=2pt,minimum size=20pt] (Sk) [left= of C] {$Sk$};
        \node[state, inner sep=2pt,minimum size=20pt] (St) [left= of Sk] {$St$};
        \node[state, inner sep=2pt,minimum size=20pt] (Ru) [above= of C] {$Ru$};
        \node[state, inner sep=2pt,minimum size=20pt] (S9) [below= of C] {$S9$};
        \node[state, inner sep=2pt,minimum size=20pt] (Pp) [above=0.2cm of CENTER] {$Pp$};
        \node[state, inner sep=2pt,minimum size=20pt] (C25) [below=0.2cm of CENTER] {$C_{25}$};
        \node[state, inner sep=2pt,minimum size=20pt] (Sl) [right= of Pp] {$Sl$};
        \node[state, inner sep=2pt,minimum size=20pt] (M) [right= of C25] {$M$};
        \node[state, inner sep=2pt,minimum size=20pt] (C*) [right=3.5cm of C] {$C^*$};

        \path (St) edge (Sk)
              (Sk) edge (Ru) edge (S9)
              (Ru) edge[<->, bend left] (C*)
                   edge[<->] (C)
              (S9) edge[<->, bend right] (C*)
                   edge[<->] (C)
              (C)  edge (C25)
                   edge [bend left] (M)
              (Pp) edge (Ru) edge (S9) edge (C25) edge (M)
              (Sl) edge (Pp)
                   edge [bend left] (C)
              (C25) edge [bend left] (C*)
              (M) edge (C*)
              (C*) edge [bend left = 18] (Sl);
% 200 160 gauche
	     \draw (St) to [out=110, in=70, looseness=7] (St);
	     \draw (Sk) to [out=110, in=70, looseness=7] (Sk);
	     \draw (Pp) to [out=110, in=70, looseness=5] (Pp);
	     \draw (Sl) to [out=30, in=350, looseness=5] (Sl);
		
\end{tikzpicture}
\end{minipage}
\end{center}
\caption{
\label{f:ex:BAN}
On the left, the interaction digraph of $F_A$, as described in Example~\ref{ex1-BAN-def}.
On the right, the interaction digraph of $F_B$, as described in Example~\ref{ex2-BAN-def}.
}
\end{figure}

In the scope of this paper, BANs (and modules) are udpated according to
the parallel update schedule.
Formally, for $F$ a BAN and $x$ a configuration of $F$, the update of $x$ under
$F$ is denoted by configuration $F(x)$, and defined as for all $s$ in $S$,
$F(x)_s = f_s(x)$.

\begin{example}
\label{ex1-BAN-update}
Consider $F_A$ of Example~\ref{ex1-BAN-def}, and $x \in \B^{S_A}$ such that
$x = 1001$.
We observe that $F_A(x) = 1111$. Configurations $1000$ and $0111$ are recurring
and form a limit cycle of size $2$, 
as well as configurations $0000$, $0001$, $1001$, $1111$, $1110$ and $0110$
that form a limit cycle of size $6$.
\end{example}

\subsubsection{Interaction digraph}

BANs are usually represented by the influence that automata hold on each other.
As such the visual representation of a BAN is a digraph, called an
interaction digraph, whose nodes are the automata of the network,
and arcs are the influences that link the different automata. 
Formally, $s$ \emph{influences} $s'$ if and only if there exist two 
configurations $x,x'$ such that $f_{s'}(x) \neq f_{s'}(x')$
and for all $r$ in $S$, $r \neq s$ if and only if $x_r = x'_r$.

\subsubsection{Dynamics}

Finally, we define the \emph{dynamics} of a BAN $F$ as the digraph with $\B^S$ as its
set of vertices. There exists an edge from $x$ to $y$ if and only if $F(x) = y$.
Computing the dynamics of a BAN from the description of its local function
is an exponential process.
See~\cite{B-Robert1986} for a more throughout introduction to BANs and
related subjects.

\subsubsection{Modules}

Modules were first introduced in~\cite{C-Perrot2018}.
A module $M$ is a BAN with added inputs. It is defined on two sets: $S$
a set of automata, and $I$ a set of inputs, with $S \cap I = \emptyset$.
Similarly to standard BANs, we can define configurations as vectors in $\B^S$,
and we define input configurations as vectors in $\B^I$.
A local function of a module updates itself based on a configuration $x$
and an input configuration $i$, concatenated into one configuration.
Formally, a local function is defined from
$\B^{S \cup I}$ to $\B$. The module $M$ defines a local function for every node
$s$ in $S$.

\begin{example}
Let $M_e$ be the module defined on $S_e = \{p, q, r\}$ and $I = \{\alpha, \beta\}$,
such that $f_p(x) = x_\alpha$,
$f_q(x) = \neg x_p$,
and $f_r(x) = x_q \vee \neg x_\beta$.
\end{example}

We represent modules with an interaction digraph, in the same way as for BANs.
The interaction digraph of a module has added arrows that represent the
influence of the inputs over the nodes; for every node $s$ and every input
$\alpha$, the node $s$ of the interaction digraph has an ingoing arrow labelled
$\alpha$ if and only if $\alpha$ influences $s$, that is, there exists two 
input configurations $i,i'$ such that for all $\beta$ in $I$,
$\beta \neq \alpha$ if and only if
$i_\beta = {i'}_\beta$, and $x$ a configuration such that
$f_s(x\cdot i) \neq f_s(x\cdot i')$, where $\cdot$ denotes the concatenation
operator.

A module is \emph{acyclic}
if and only if its interaction digraph is cycle-free.

\subsubsection{Recursive wirings}

A recursive wiring over a module $M$ is defined by a partial function
$\omega : I \not\to S$. The result of such a wiring is denoted
$\circlearrowright_\omega M$, a module defined over sets $S$ and
$I \setminus \dom(\omega)$, in which the local function of node $s$ is denoted
$f'_s$ and defined as

\begin{equation*}
	\forall x \in \B^{S \cup I},\ f'_s(x) = f_s( x \circ \hat \omega ),
	\text{ with } \hat\omega(i) = \begin{array}\{{ll}.
  		\omega(i) & \text{if } i \in \dom(\omega)\\
		i & \text{if } i \in I \setminus \dom(\omega)
	\end{array}\text{.}
\end{equation*}

% -> fonction de sortie

\subsubsection{Output functions}

Output functions were first introduced in~\cite{C-Perrot2020} and present another
way of computing the evolution of an acyclic module. In the Boolean case, those
functions are defined on $\B^{I \times \{1, \ldots, D\}} \to \B$,
for I the input set of the module, and D some integer.
We interpret an input in $\B^{I \times \{1, \ldots, D\}}$ as an evaluation over
$\B$ of a set of variables ${I \times \{1, \ldots, D\}}$,
and for $\alpha \in I$ and $d \leq D$, we denote this variable by $\alpha_d$.
In the context of an acyclic module $M$, $\alpha_d$
is refering to the evaluation of the
input $\alpha$ on the $d$th update of the module. A vector
$j \in \B^{I \times \{1, \ldots, D\}}$ simply describes an evaluation of all
the inputs of the network over $D$ iterations. With such a vector, and
$x \in \B^S$, it is easy to see that the acyclic module $M$ can be updated
$k$ times in a row, for any $k \leq D$.
The result of this update is denoted by $M(x, j_{[1, \ldots, k]})$.
The \emph{delay} of an output function $O$ is the maximal value in the set of
all the $d \in \N$ for which there exists $\alpha \in I$ such that variable
$\alpha_d$ has an influence on the computation of $O$.
Finally,
for $M$ an acyclic module defined on the sets $S$ and $I$,
for $D$ a large enough integer, for $x \in \B^S$ and
$j \in \B^{I \times \{1, \ldots, D\}}$ some vectors, and for $s$ a node in $S$,
we define the output function of $s$, denoted $O_s$, as the output function
with minimal delay $d$ such that $O_s(j) = M(x, j_{[1, \ldots, k]})_s$.
Such a function always exists and is always unique.

\subsection{Promise problems}

In this paper, we make the hypothesis that every module passed as an input
of a complexity problem follows the property that each of its local functions
has only \emph{essential} variables. That is, a variable is included in the
circuit encoding that function if and only if the automaton or input represented
by that variable has an influence on said function.
This hypothesis will be implemented throughout this paper by the use of
promise problems~\cite{O-Goldreich2006}, which include a decision method which
can dismiss instances of the problem without that method's complexity cost
being included in the complexity of the problem.

This approach is motivated by the fact that obfuscating the relation
between automatons by building redundant variables in a circuit 
increases the complexity of most considered problems. We justify our
decision in two points: first, the approach of this paper is one of providing
and studying an applicable method in a context where misleading inputs in
local functions are unlikely. Second, despite the inclusion of these
promises
we find high complexity issues in our pipeline, and as such we consider that
it helps understanding the precise issues that
prevent our method from being efficient.

\section{From BANs to AMs} \label{s:PartA}

% Argumentation de pourquoi il est nécessaire d'obtenir le moins d'entrées
% possibles. Résultat NP complet

The first step of our process is to unfold a BAN into an AM. This simply
requires the removal of any cycle in the interaction digraph of the
BAN, and their replacement by inputs.
In the scope of this paper, the number of inputs generated is required to be
minimal. This is justified by the fact that the complexity of most of the
problems addressed in the pipeline highly depends on the number of inputs
of the considered AM.

\begin{fpproblem}
\fpproblemtitle{Acyclic Unfolding Functional Problem}
\fpprobleminput{A Boolean automata network $F$, an integer $k$.}
\fpproblempromise{The encoding of the local functions of $F$ only has
                  essential variables.}
\fpproblemoutput{An acyclic module $M$ with at most $k$ inputs and a
recursive wiring $\omega$ such that $\circlearrowright_\omega M = F$.}
\end{fpproblem}

%The equality operator has to be interpreted here has syntactical equality;
%the encoding of the local functions of both $\circlearrowright_\omega M$ and
%$F$ are equivalent on a syntaxical level. This is general enough for our
%purposes, as assuming an equivalent
%but not syntaxically equal solution, one could construct a syntaxically exact
%solution that has the same interaction digraph, and thus solve the problem.

\begin{theorem}
\label{th-unfold-up}
The Acyclic Unfolding Functional Problem is in \FNP.
\end{theorem}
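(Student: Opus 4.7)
The plan is to place the problem in \FNP{} by producing a polynomial-size nondeterministic certificate from which a valid output $(M,\omega)$ can be constructed and checked in polynomial time. I will take as certificate a feedback vertex set $V' \subseteq S$ of the interaction digraph of $F$, with $|V'| \leq k$. Intuitively, each $s \in V'$ will be ``cut'' from the feedback by replacing every incoming occurrence of $s$ in the other local functions with a fresh input, and wiring that input back to $s$. This reduces the construction of $M$ and $\omega$ to a FVS-style decomposition.

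First I would observe that, thanks to the essential-variable promise on $F$, the interaction digraph of $F$ can be computed in polynomial time directly from the circuits of its local functions: by definition, $s$ influences $s'$ exactly when $x_s$ is an essential variable of $f_{s'}$, and the promise guarantees that a variable occurs in the circuit of $f_{s'}$ if and only if it is essential. So building the interaction digraph $G_F$ amounts to scanning the circuits. Given a guessed set $V' \subseteq S$ with $|V'| \le k$, verifying that $V'$ is a feedback vertex set of $G_F$ is then straightforward: delete all arcs outgoing from $V'$ and test acyclicity in linear time.

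Having certified $V'$, I would build $(M,\omega)$ deterministically as follows. For each $s \in V'$, introduce a new input $\alpha_s$ and set $\omega(\alpha_s) = s$. Obtain the local functions of $M$ from those of $F$ by a purely syntactic substitution: in every circuit encoding some $f_{s'}$, replace every input-node labelled $x_s$ (for $s \in V'$) by an input-node labelled $\alpha_s$. This transformation is polynomial, produces a module $M$ with exactly $|V'| \le k$ inputs, and by construction $M$'s interaction digraph is $G_F$ with the outgoing arcs of $V'$ removed, which is acyclic by the FVS property. The recursive wiring $\circlearrowright_\omega M$ then undoes the substitution variable by variable, so the resulting circuits are literally the circuits of $F$, yielding $\circlearrowright_\omega M = F$. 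The promise on $F$ transfers along this substitution, so no essential-variable issue is introduced in the output.

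The only delicate point in the plan is the polynomial-time computation of the interaction digraph from circuit encodings, which truly relies on the promise: without it, testing whether a variable is essential in a circuit is coNP-hard, and the whole verification would collapse. Once the promise is in place, the rest of the argument is routine and the nondeterministic algorithm above witnesses membership of the Acyclic Unfolding Functional Problem in \FNP.
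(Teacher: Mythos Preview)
Your approach is correct and somewhat more constructive than the paper's. The paper simply guesses the pair $(M,\omega)$ nondeterministically and verifies that $\circlearrowright_\omega M$ equals $F$ \emph{syntactically}, arguing separately that whenever any solution exists, one exists for which this purely syntactic comparison succeeds (namely the one obtained from $F$'s own circuits by a variable substitution that $\omega$ then reverses). You instead guess only a feedback vertex set $V'$ of $G_F$ and build $(M,\omega)$ deterministically from it via the substitution $x_s\mapsto\alpha_s$ for $s\in V'$. This makes the feedback-vertex-set structure---which the paper invokes only for the lower bound in Theorem~\ref{th-unfold-down}---visible on the upper-bound side as well, and in effect shows that the minimum number of inputs coincides with the minimum FVS size of $G_F$.

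One step you leave implicit and should state is the \emph{completeness} of your certificate: you need that whenever some acyclic $(M,\omega)$ with at most $k$ inputs exists, a feedback vertex set of size at most $k$ exists too. This is short but absent from your plan: take $T=\mathrm{Im}(\omega)$; every arc of $G_F$ whose source lies outside $T$ is already a node-to-node arc of $M$'s acyclic interaction digraph (since no input is wired to that source), so deleting the outgoing arcs of $T$ leaves an acyclic graph, and $|T|\leq |I|\leq k$. Without this observation your nondeterministic procedure is sound (every accepted $V'$ yields a valid output) but not yet shown complete.
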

\begin{proof}
  The promise of this problem allows us to compute the interaction digraph of
  $F$ in polynomial time.

  Consider the following simple non-deterministic algorithm: first guess a module
  $M$ and a wiring $\omega$; then check that the number of inputs in $M$ is no more than $k$
  and that $\circlearrowright_\omega M$ syntactically equals $F$.
  
  This algorithm operates in polynomial non-deterministic time since the
  recursive wiring is a simple substitution of variables,
  and thanks to the fact that one only needs to compare 
  $\circlearrowright_\omega M$ and $F$ at a syntactical level.
  Indeed, if any solution exists, then a solution exists with the
  same number of nodes, the same inputs, the same wirings, and such that
  the substitution operated by $\omega$ on $M$ leads to the local functions
  of $F$ written identically:
  because all local functions are equal on a semantic level,
  this is always possible,
  by starting from the local functions of $F$ and operating variable
  substitutions that are then reversed by the recursive wiring $\omega$
  (remark that this last ``reversed'' construction is not required to be
  computable in polynomial time).
%Consider the following simple non-deterministic algorithm: first guess a module
%$M$ and a wiring $\omega$, such that $\circlearrowright_\omega M = F$. Finally
%check if the number of inputs in $M$ is no more than $k$, and if so, return
%$M$ and $\omega$ as the solution to the instance.
%
%This algorithm operates in polynomial non-deterministic time since the recursive
%wiring operation is a simple substitution of variables, and checking the
%equality $\circlearrowright_\omega M = F$ is a matter of comparing each local
%function.
%This is easily satisfied for every local function by choosing
%a module $M$ that has local functions which are the result of a variable
%substitution of the local functions of $F$, that are then reversed by the
%recursive wiring $\omega$.
\end{proof}

\begin{theorem}
\label{th-unfold-down}
The Acyclic Unfolding Functional Problem is \NP-hard.
\end{theorem}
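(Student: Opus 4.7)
The plan is to reduce from the Directed Feedback Vertex Set (DFVS) decision problem, which is NP-complete. Given an instance $(G,k)$ with $G=(V,E)$, I would construct in polynomial time a BAN $F$ on node set $V$ whose interaction digraph is exactly $G$ and which satisfies the promise: for each $v\in V$ with in-neighbour set $N^-(v)$, set $f_v(x) = \bigwedge_{u\in N^-(v)} x_u$ (and the constant $1$ if $N^-(v)=\emptyset$). Every variable of this circuit is essential, and the influences of $F$ coincide with the arcs of $G$. The crux is then the equivalence ``$G$ has a DFVS of size at most $k$ if and only if $(F,k)$ admits a solution to the Acyclic Unfolding Functional Problem'', which yields a Turing reduction from DFVS and thus NP-hardness of the functional problem.

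For the forward direction I would take a DFVS $V^*\subseteq V$ of size at most $k$ and build $M$ on node set $V$ with inputs $I=\{\alpha_v : v\in V^*\}$ and wiring $\omega(\alpha_v)=v$, obtaining the local functions of $M$ from those of $F$ by substituting $\alpha_v$ for every occurrence of $x_v$, for each $v\in V^*$. Every arc in the $S$-subgraph of $M$'s interaction digraph then has source outside $V^*$, so any cycle of $M$ would lie in $G-V^*$, contradicting that $V^*$ is a DFVS; hence $M$ is acyclic, $\circlearrowright_\omega M = F$ by construction, and $|I|\le k$.

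For the converse I would take any valid solution $(M,\omega)$ and let $V^* = \mathrm{image}(\omega)\subseteq V$, so that $|V^*|\le |I|\le k$. Given any cycle $v_1\to\cdots\to v_n\to v_1$ in $G$, each arc $(v_i,v_{i+1})$ arises in $F=\circlearrowright_\omega M$ either from a direct essential occurrence of $x_{v_i}$ in $f^M_{v_{i+1}}$, or from the substitution of some input $\alpha$ with $\omega(\alpha)=v_i$. If no $v_i$ were in $V^*$, all arcs of the cycle would be of the first kind and the cycle would live inside the $S$-subgraph of $M$, contradicting acyclicity; hence some $v_i\in V^*$ and $V^*$ is a DFVS of size at most $k$.

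The main subtlety I foresee is in this converse direction: one has to argue that the essential influence of $v_i$ on $v_{i+1}$ in $F$ cannot be ``faked'' by cancellations between direct and input-substituted occurrences in $M$. The key observation is that if $v_i\notin V^*$ then no input of $M$ maps to $v_i$, so the only possible source of an essential $x_{v_i}$ in $f^F_{v_{i+1}}$ is a direct essential occurrence in $f^M_{v_{i+1}}$, which makes the two-case alternative above exhaustive.
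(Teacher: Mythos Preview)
Your proposal is correct and follows essentially the same approach as the paper: both reduce from (Directed) Feedback Vertex Set by building a BAN whose interaction digraph is exactly $G$ (the paper uses OR of in-neighbours where you use AND), and both recover a feedback vertex set from any valid acyclic unfolding. Your converse direction, via $V^*=\mathrm{image}(\omega)$ together with the essentiality argument in your final paragraph, is in fact more carefully justified than the paper's somewhat informal description of extracting the FVS as those nodes whose variable ``has been replaced in every local function in $M$ by the same input variable''.
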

\begin{proof}
Let us provide a reduction from the Feedback Vertex Set problem. We provide
$f$ a function that for any instance $(G, k)$ of the Feedback Vertex Set problem,
provides an instance $(F, k)$ of the Optimal Acyclic Unfolding Problem
where $S = V(G)$ and $f_s$ is a OR function of exactly every node $s'$ such
that $(s', s) \in A(G)$. This construction is explicitly designed so that
the interaction digraph of $F$ is isomorphic to $G$. Clearly $f$ is computable
in polynomial time.
We also provide $g$ a function that for $(G, k)$ an instance of the
Feedback Vertex Set problem, and $M$ a solution to the
Optimal Acyclic Unfolding Problem, checks if $S = V(G)$, and then
deduces the solution for $(G, k)$ the following way: $s$ is part of
the feedback vertex set if and only if its influence has been replaced by an
input
in $M$. This means that the
variable $x_s$ has been replaced in every local function in $M$ by the same
input variable.
Finally $g$ checks that the size of the obtained set is not greater
than $k$. It is clear that $g$ is polynomial.

From the definition of $f$ and $g$, it follows that
the Feedback Arc Set problem reduces in polynomial time to
the Optimal Acyclic Unfolding Problem, which implies the result.
\end{proof}

\begin{example}
\label{ex1-AM}
Consider $S_A$ and $F_A$ of Example~\ref{ex1-BAN-def}. Let us define
 $I_A = \{\alpha\}$. Let $M_A$ be the acyclic module that defines
$f'_a(x) = x_{\alpha}$,
$f'_b(x) = f'_c(x) = x_a$,
and $f'_d(x) = \neg x_b \vee \neg x_c$. The module $M_A$ is a valid answer to
the instance $F_A, k = 1$ of the Acyclic Unfolding Functional Problem.
The interaction digraph of this module is represented in Figure~\ref{f:ex:MOD}
(left panel).
\end{example}

\begin{example}
\label{ex2-AM}
Consider $S_B$ and $F_B$ of Example~\ref{ex2-BAN-def}. Let us define
$I_B = \{\alpha_{St}, \alpha_{Sl}, \alpha_{Sk},$
             $\alpha_{Pp}, \alpha_{C}, $ $\alpha_{C^*}\}$.
Let $M_B$ be the acyclic module that defines
$f'_{St}(x) = \neg x_{\alpha_{St}}$,
$f'_{Sl}(x) = \neg x_{\alpha_{Sl}} \vee x_{\alpha_{C^*}}$,
$f'_{Sk}(x) = x_{\alpha_{St}} \vee \neg x_{\alpha_{Sk}}$,
$f'_{Pp}(x) = x_{\alpha_{Sl}} \vee \neg x_{\alpha_{Pp}}$,
$f'_{Ru}(x) = f_{S9}(x) = \neg x_{\alpha_{Sk}} \vee x_{\alpha_{Pp}}
                    \vee \neg x_{\alpha_C} \vee \neg x_{\alpha_{C^*}}$,
$f'_C(x) = \neg x_{Ru} \vee \neg x_{S9} \vee \neg x_{\alpha_{Sl}}$,
$f'_{C25}(x) = \neg x_{\alpha_{Pp}} \vee x_{\alpha_C}$,
$f'_{M}(x) = x_{\alpha_{Pp}} \vee \neg x_{\alpha_C}$,
and $f'_{C^*}(x) = \neg x_{Ru} \vee \neg x_{S9} \vee x_{C25} \vee \neg x_M$.
The module $M_B$ is a valid answer to
the instance $F_B, k = 6$ of the Acyclic Unfolding Functional Problem.
The interaction digraph of this module is represented in
Figure~\ref{f:ex:MOD} (right panel).
\end{example}

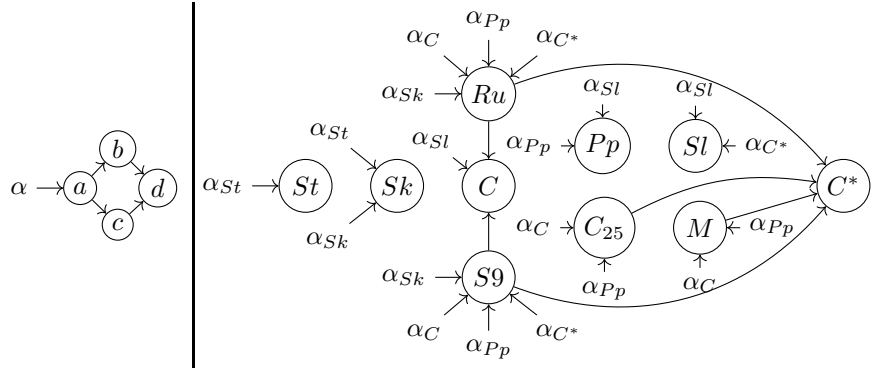
\begin{figure}[t!]
\begin{center}
\begin{minipage}{.2\textwidth}
\begin{tikzpicture} [->, node distance=0.15cm]

    %CENTER
        \node (CENTER) at (0, 0) {};

    %BAN
        \node[state, inner sep=2pt,minimum size=0] (A) [left = of CENTER] {$a$};
        \node[state, inner sep=2pt,minimum size=0] (B) [above= of CENTER] {$b$};
        \node[state, inner sep=2pt,minimum size=0] (C) [below= of CENTER] {$c$};
        \node[state, inner sep=2pt,minimum size=0] (D) [right= of CENTER] {$d$};

        \path (A) edge (B) edge (C)
              (B) edge (D)
              (C) edge (D);

	    \node [left =10pt of A] (alpha1) {$\alpha$};
	    \draw (alpha1) -- (A);
		
\end{tikzpicture}
\end{minipage}
\vline
\begin{minipage}{.70\textwidth}
\begin{tikzpicture} [->, node distance=0.5cm]

    %CENTER
        \node (CENTER) at (0, 0) {};

    %BAN
        \node[state, inner sep=2pt,minimum size=20pt] (C) [left = of CENTER] {$C$};
        \node[state, inner sep=2pt,minimum size=20pt] (Sk) [left= of C] {$Sk$};
        \node[state, inner sep=2pt,minimum size=20pt] (St) [left= of Sk] {$St$};
        \node[state, inner sep=2pt,minimum size=20pt] (Ru) [above= of C] {$Ru$};
        \node[state, inner sep=2pt,minimum size=20pt] (S9) [below= of C] {$S9$};
        \node[state, inner sep=2pt,minimum size=20pt] (Pp) [above right=0.2cm of CENTER] {$Pp$};
        \node[state, inner sep=2pt,minimum size=20pt] (C25) [below right=0.2cm of CENTER] {$C_{25}$};
        \node[state, inner sep=2pt,minimum size=20pt] (Sl) [right= of Pp] {$Sl$};
        \node[state, inner sep=2pt,minimum size=20pt] (M) [right= of C25] {$M$};
        \node[state, inner sep=2pt,minimum size=20pt] (C*) [right=4cm of C] {$C^*$};

        \path (Ru) edge[->, bend left = 35] (C*)
                   edge[->] (C)
              (S9) edge[->, bend right = 35] (C*)
                   edge[->] (C)
              (C25) edge [bend left = 18] (C*)
              (M) edge (C*);

         \node [left = 10pt of St] (Sta1) {$\alpha_{St}$};
         \draw (Sta1) -- (St);
         \node [above left = 10pt of Sk] (Ska1) {$\alpha_{St}$};
         \node [below left = 10pt of Sk] (Ska2) {$\alpha_{Sk}$};
         \draw (Ska1) -- (Sk);
         \draw (Ska2) -- (Sk);
         \node [above left = 10pt of Ru] (Rua1) {$\alpha_{C}$};
         \node [left = 10pt of Ru] (Rua2) {$\alpha_{Sk}$};
         \node [above right = 10pt of Ru] (Rua3) {$\alpha_{C^*}$};
         \node [above = 10pt of Ru] (Rua4) {$\alpha_{Pp}$};
         \draw (Rua1) -- (Ru);
         \draw (Rua2) -- (Ru);
         \draw (Rua3) -- (Ru);
         \draw (Rua4) -- (Ru);
         \node [below left = 10pt of S9] (S9a1) {$\alpha_{C}$};
         \node [left = 10pt of S9] (S9a2) {$\alpha_{Sk}$};
         \node [below right = 10pt of S9] (S9a3) {$\alpha_{C^*}$};
         \node [below = 10pt of S9] (S9a4) {$\alpha_{Pp}$};
         \draw (S9a1) -- (S9);
         \draw (S9a2) -- (S9);
         \draw (S9a3) -- (S9);
         \draw (S9a4) -- (S9);
         \node [above left = 5pt of C] (Ca1) {$\alpha_{Sl}$};
         \draw (Ca1) -- (C);
         \node [left = 5pt of Pp] (Ppa1) {$\alpha_{Pp}$};
         \node [above = 5pt of Pp] (Ppa2) {$\alpha_{Sl}$};
         \draw (Ppa1) -- (Pp);
         \draw (Ppa2) -- (Pp);
         \node [above = 5pt of Sl] (Sla1) {$\alpha_{Sl}$};
         \node [right = 5pt of Sl] (Sla2) {$\alpha_{C^*}$};
         \draw (Sla1) -- (Sl);
         \draw (Sla2) -- (Sl);
         \node [left = 5pt of C25] (C25a1) {$\alpha_{C}$};
         \node [below = 5pt of C25] (C25a2) {$\alpha_{Pp}$};
         \draw (C25a1) -- (C25);
         \draw (C25a2) -- (C25);
         \node [right = 5pt of M] (Ma1) {$\alpha_{Pp}$};
         \node [below = 5pt of M] (Ma2) {$\alpha_{C}$};
         \draw (Ma1) -- (M);
         \draw (Ma2) -- (M);
		
\end{tikzpicture}
\end{minipage}
\end{center}
\caption{
\label{f:ex:MOD}
On the left, the interaction digraph of $M_A$, as described in Example~\ref{ex1-AM}.
On the right, the interaction digraph of $M_B$, as described in Example~\ref{ex2-AM}.
}
\end{figure}

\section{Output functions} \label{s:PartB}

Output functions were first introduced in~\cite{C-Perrot2020}. They
are a way to characterise the asymptotic behaviour of an AM as a
set of Boolean functions that are computed from the local functions of the AM.
Computing the output functions of an AM is a crucial step in the pipeline
proposed in this work.

\begin{fpproblem}
\fpproblemtitle{Output Circuit Computation Problem}
\fpprobleminput{An acyclic module $M$, and $X \subseteq S$ a set of output nodes.}
\fpproblempromise{The encoding of the local functions of $M$ only has
                  essential variables.}
\fpproblemoutput
  {An output function for each node in $X$, encoded as a Boolean circuit.} 
\end{fpproblem}

\begin{theorem}
The Output Circuit Computation Problem is in FP.
\end{theorem}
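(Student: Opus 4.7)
The plan is to time-unfold $M$ and organize this unfolding as a single shared DAG, so that all of the required output circuits can be read off in polynomial time.

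First, use the essential-variables promise to extract the interaction digraph of $M$ in polynomial time and produce a topological ordering of $S$. For each node $s \in S$, compute its \emph{depth} $d_s$, defined inductively as $d_s = 1$ if $s$ has no node-predecessor in $M$, and $d_s = 1 + \max_{s' \prec s} d_{s'}$ otherwise, where $s' \prec s$ ranges over node-predecessors of $s$ in $M$. Set $D = \max_{s \in X} d_s$; note that $D \leq |S|$.

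Next, build a time-expansion DAG as follows. For each pair $(s,t)$ with $s \in S$ and $d_s \leq t \leq D$, taken in order of increasing $t$, instantiate a copy $C_{s,t}$ of the Boolean circuit of $f_s$, identifying each wire corresponding to a node-predecessor $s'$ with the output gate of $C_{s', t-1}$, and each wire corresponding to an input-predecessor $\alpha \in I$ with a fresh variable $\alpha_t$. This is well-defined since $s' \prec s$ implies $d_{s'} \leq d_s - 1 \leq t-1$, so $C_{s', t-1}$ has already been built when $C_{s,t}$ is processed. Finally, for each $s \in X$, return the circuit rooted at the output gate of $C_{s, d_s}$; all returned circuits share the gates of the single common time-expansion DAG.

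Correctness is an induction on $t$: the gate $C_{s,t}$ computes the value taken by $s$ after $t$ parallel updates of $M$ on any input sequence whose first $t$ configurations agree with $j_{[1,\ldots,t]}$, independently of the initial configuration. In particular $C_{s, d_s}$ realises $O_s$, and the essential-variables promise, which forbids a local function from syntactically depending on a semantically inert variable, rules out a smaller delay. For the running time, the constructed DAG contains at most $D \leq |S|$ time-slices, each contributing at most $\sum_{s \in S}|f_s|$ gates, for a total size bounded by $|S| \cdot |M|$, polynomial in the input size. The main subtlety, and the point where a naive implementation fails, is the DAG representation: substituting each $O_{s'}$ as a tree into the circuit of $f_s$ would cause exponential blow-up along diamond patterns of the interaction digraph, so the circuits returned for the different nodes of $X$ must literally share gates through the common time-expansion.
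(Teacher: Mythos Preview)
Your proof is correct and follows essentially the same strategy as the paper: both time-unfold the acyclic module into a polynomial-size DAG of copies of the local-function circuits, relying on gate sharing (one copy per node/time pair) to avoid exponential blow-up. The only cosmetic difference is the direction of the unfolding---you build forward layers $C_{s,t}$ for increasing $t$ and read off $C_{s,d_s}$, whereas the paper walks backward from each output $s$ via a requirement set $R=\{(t,d)\}$ of increasing delay, which yields the reversed labelling convention for the delayed input variables $\alpha_d$.
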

\begin{proof}
The promise of this problem allows us to deduce
the interaction digraph of $M$ in polynomial time.

To compute $X$, we provide an algorithm to compute the output function circuit
of any node $s \in S$ in polynomial time.

The algorithm first constructs a list of requirements.
This list is initialy $R_0 = \{(s, 0)\}$, which can be interpreted to say that
we require the construction of the output
function of $s$ with added delay $0$.

We construct the next list the following way:
$(t', d') \in R_{k+1}$ if and only if there exists some $(t, d) \in R_{k}$ such
that $t'$ influences $t$ in $M$.

The total list $R$ is simply defined as $R = \bigcup_{i \in \N} R_i$.

\begin{claim}
$R$ is computable in polynomial time in the size of $M$.
\end{claim}

To see that this is true, consider that for $D$ the maximal depth of $M$,
the maximal $d$ such that $(t, d) \in R_k$ for any $k$ is $D$. Indeed since
the interaction digraph of $M$ is acyclic, the maximal delay value can only
be obtained by following the longest path in $M$. As such we can conclude that
the size of any $R_k$ is bounded by $D \times n$, for $n$ the size of the
network $M$. Finally, by a similar argument,
consider that the list $R$ converges after a maximum of $D$ steps. This implies
that the list $R$ is computed after $D$ steps of a $D \times n$ costly
process, and $R$ can therefore be computed in polynomial time.

We can construct the Boolean circuit from $R$ in the following way: for
every pair $(t, d) \in R$, take an instance of the Boolean circuit which encodes
the local function of $t$. Combine all of these instances the following way:
any input variable in $I$ is replaced by its delayed counterpart with delay
$1 + d$. For example, if a variable $\alpha$ appeared in the local function
of node $t$, substitute it by the variable $\alpha_{d + 1}$. Then, for any
gate displaying an input variable $t' \in S$, replace it with the same gate,
which rather than taking the value of variable $t'$, takes the value of the
output of the circuit that computes the local function of the node $t'$ with
added delay $d + 1$. By definition of $R$ this circuit will always be in $R$.
The obtained circuit computes the output function of the node $s$.

Repeat this process for every $s \in X$.
\end{proof}

\begin{example}
\label{ex1-OUT}
Consider $M_A$ of Example~\ref{ex1-AM}. Let
$X_A = \{d\}$ be an instance of the
Output Circuit Computation Problem. The circuit
$O_d = \neg \alpha_3$ is a
valid answer to that instance.
\end{example}

\begin{example}
\label{ex2-OUT}
Consider $M_B$ of Example~\ref{ex2-AM}. Let
$X_B = \{St, Sk, Sl, Pp, C, C^*\}$ be an instance of the
Output Circuit Computation Problem. The circuits
$O_{St} = \neg \alpha_{St, 1}$,
$O_{Sl} = \neg \alpha_{Sl, 1} \vee \alpha_{C^*, 1}$,
$O_{Sk} = \alpha_{St, 1} \vee \neg \alpha_{Sk, 1}$,
$O_{Pp} = \alpha_{Sl, 1} \vee \neg \alpha_{Pp, 1}$,
$O_C = ( \alpha_{Sk, 2} \wedge \neg \alpha_{Pp, 2}
  \wedge \alpha_{C, 2} \wedge \alpha_{C^*, 2} ) \vee \neg \alpha_{Sl, 1}$
and $O_{C^*} = \alpha_{C, 2} \vee \neg \alpha_{Pp, 2}$ taken altogether are a
valid answer to that instance.
\end{example}

\section{Optimal acyclic module synthesis} \label{s:PartC}

% Présentation de la question en général

\subsection{Module Synthesis}

This part of the process takes in a set of output functions and generates a
module that realizes these functions with an hopefully minimal number of nodes.
In this part the actual optimisation of the pipeline, if any, can
be directly observed. It is also the part of the pipeline which is the most
computationnaly costly.

\begin{fproblem}
\fproblemtitle{Module Synthesis Problem}
\fprobleminput{A set $I$ of input labels, a finite set of output functions $O$,
encoded as Boolean circuits,
defined on those labels, and $k$ an integer.}
\fproblemoutput{An acyclic module $M$ with at most $k$ nodes such that every
function in $O$ is the output function of at least one node in $M$.} 
\end{fproblem}

\subsection{Complexity results}

\begin{theorem}
\label{th-syn-down}
The Module Synthesis Problem is coNP-hard.
\end{theorem}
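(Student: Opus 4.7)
The plan is to establish \coNP-hardness via a polynomial-time reduction from the tautology problem (TAUT) for Boolean circuits, which is \coNP-complete. Given an instance $\phi$ of TAUT on variables $x_1, \ldots, x_n$, I would construct the Module Synthesis instance $(I, O, k)$ with $I = \{\alpha_1, \ldots, \alpha_n\}$, $O = \{\phi', \top'\}$, and $k = 1$, where $\phi'$ is obtained from $\phi$ by renaming each variable $x_i$ to the delay-$1$ input variable $\alpha_{i,1}$, and $\top'$ is a trivial tautology circuit such as $\alpha_{1,1} \vee \neg \alpha_{1,1}$. This construction is clearly polynomial in the size of $\phi$.

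The correctness of the reduction rests on a single structural observation: an acyclic module with exactly one node $s$ has no self-loop, so its local function $f_s$ depends only on the inputs, and the unique output function is $O_s(j) = f_s(j_{[1]})$. Since each node admits a unique output function, a one-node module can realise at most one semantic function. Consequently, the instance $(I, O, 1)$ is a YES instance if and only if $\phi'$ and $\top'$ denote the same semantic function, i.e.\ if and only if $\phi$ is a tautology.

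For the forward direction, if $\phi$ is a tautology then $\phi'$ and $\top'$ both denote the constant-true function; taking $M$ to be the single-node module with $f_s \equiv \top$ yields $O_s$ equal to the constant-true function, which semantically matches both members of $O$. For the backward direction, if some one-node module $M$ realises both $\phi'$ and $\top'$ as output functions of its unique node, then by the uniqueness of the output function $\phi' \equiv O_s \equiv \top'$, so $\phi'$ is the constant-true function and hence $\phi$ is a tautology.

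The one subtlety I would check carefully concerns constant output functions: since $O_s$ is required to have \emph{minimal} delay, when $\phi$ is a tautology $O_s$ has delay $0$, whereas the input circuit $\phi'$ is a syntactic object of delay $1$. The matching between members of $O$ and output functions of nodes is nonetheless purely semantic, so this syntactic discrepancy is harmless. With that caveat resolved, the reduction is complete and the Module Synthesis Problem is \coNP-hard.
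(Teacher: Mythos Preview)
Your proposal is correct and follows essentially the same reduction as the paper: both reduce from TAUT by pairing the (delay-$1$ relabelled) formula with a constant-true output function and asking for a one-node module, so that a solution exists iff the two functions are semantically equal, i.e.\ iff the formula is a tautology. The only cosmetic differences are that the paper encodes the constant-true function as a delay-$0$ constant rather than your $\alpha_{1,1}\vee\neg\alpha_{1,1}$, and that you explicitly discuss the delay subtlety, which the paper leaves implicit.
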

\begin{proof}
Consider $f$ an instance of the Tautology problem, with $I$ the set of propositional
variables contained in $f$. We define $f'$ as the output function defined on the
labels $I$ such that $f'$ is obtained from $f$ by substituting all variables
$\alpha \in I$ by their equivalent of delay 1, $\alpha_1$. Let us also define
$f_1$ as the constant output function of delay $0$ which value is always $1$.
We compose an instance of the Module Synthesis Problem with $I$ the set of input
labels, $O = \{f', f_1\}$ and $k = 0$. This instance has a solution if and only
there exists an acyclic module with only one node such that the output function
of this node is equivalent to all the output functions in $O$. This implies that,
if the problem has a solution, $f'$ is equivalent to $f_1$, which proves that
$f'$ and $f$ are tautologies.
Therefore computing the output of the Module Synthesis
Problem requires solving a coNP-hard decision problem.
\end{proof}

\begin{theorem}
The Module Synthesis Problem is in \FNPcoNP.
\end{theorem}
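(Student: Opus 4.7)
The plan is to show that the problem can be solved by a non-deterministic polynomial-time algorithm with access to a \coNP{} oracle. First, non-deterministically guess a candidate module $M$ with at most $k$ nodes, represented by its interaction digraph between $S$ and $S \cup I$ together with, for each $s \in S$, a Boolean circuit encoding its local function $f_s$. Each such circuit is guessed of size polynomial in the input length.

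Next, in deterministic polynomial time, verify that the guessed interaction digraph is acyclic and that $|S| \leq k$, and invoke the algorithm from Section~\ref{s:PartB} to compute, in polynomial time, a Boolean circuit $O_s$ for every $s \in S$. To check that each $f \in O$ is realised by some $O_s$, for every pair $(f,s) \in O \times S$ query the \coNP{} oracle with the instance ``Is $O_s \equiv f$?''; this is the canonical \coNP-complete circuit equivalence problem, since its complement amounts to guessing an input on which the two circuits differ. With at most $|O| \cdot |S|$ oracle queries, the algorithm determines in polynomial time whether every $f \in O$ has at least one matching $O_s$, outputting $M$ if so and rejecting the current path otherwise.

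The main obstacle is to justify that the polynomial size bound on the guessed local function circuits is without loss of generality. The intended argument is that whenever a valid module exists, one may be chosen in which each local function is encoded by a polynomial-size circuit: each ``target'' node $s$ has an output function equivalent to some $f \in O$, which is already polynomial in the input, and in a minimal realisation the local function at $s$ can be extracted as a substructure of a corresponding realisation of that output function (the substitution procedure of Section~\ref{s:PartB} shows that the local function circuit of $s$ embeds into any circuit computing $O_s$, so one may always replace $f_s$ by a circuit of size at most $|O_s|$). Once this size guarantee is in place, membership in \FNPcoNP{} follows directly from combining the non-deterministic guess with the polynomially many \coNP{} oracle queries.
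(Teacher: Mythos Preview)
The core algorithm you describe---guess a module, compute its output functions via the procedure of Section~\ref{s:PartB}, then check equivalence against each $f\in O$ with at most $|S|\cdot|O|$ calls to a \coNP{} oracle---is exactly the paper's proof.

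Where you diverge is in your final paragraph, which attempts to justify that the guessed module may be assumed to have local-function circuits of polynomial size. The concern is legitimate (and the paper's three-line proof simply ignores it), but your resolution does not hold. The procedure of Section~\ref{s:PartB} builds a circuit for $O_s$ by composing copies of the local-function circuits; hence the circuit for $f_s$ embeds into \emph{that particular} circuit for $O_s$, not into an \emph{arbitrary} circuit computing the same function. Having a small circuit equivalent to $O_s$ (namely some $f\in O$) does not let you read off a small circuit for $f_s$, since semantically equivalent Boolean functions can have completely unrelated circuit structure. Moreover, your argument only addresses ``target'' nodes whose output function lies in $O$; a valid module with $k$ nodes may also contain intermediate nodes whose output functions do not appear in $O$ at all, and for those you have no given circuit to extract anything from. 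So the size issue you rightly flagged is not settled by your argument, and remains as open in your write-up as it is in the paper's.
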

\begin{proof}
Consider the following algorithm. First, guess an acyclic module $M$, with
size $k$. Compute every output function of the network, which is in FP. Then
simply check that every function in $O$ is equivalent to at least one
output function in $M$, which requires at most $|M| \times |O|$ calls to
a \coNP\ oracle.
\end{proof}

\subsection{Refining the complexity bounds}

It is unclear whether the synthesis problem can be proven to be in \FcoNP\ or
to be \NPcoNP-hard. An attempt has been made to prove the former by using a
greedy algorithm which would fuse nodes in an acyclic module, starting
from a trivially large enough module. This method requires solving the
following problem:

\begin{dproblem}
\problemtitle{Module Local Fusion Problem}
\probleminput{An acyclic module $M$ defined on sets $S$ and $I$,
  and $a, b$ two different nodes.}
\problemquestion{Is there a local function $f_c$ such that there exists
some acyclic module $M'$ defined on the node set
$S \cup \{c\} \setminus \{a, b\}$ and input set $I$, such that $f_c \equiv f'_c$
and $O_s \equiv O'_s$ for $s \in S \setminus \{a, b\}$?}
\end{dproblem}

This problem formalises the idea of replacing two nodes by one in an acyclic
module, such that every other output function in the module is conserved.
Assuming the removed nodes are not considered outputs of the network is
an important step of any greedy algorithm that would try to optimise the size
of an acyclic module.

It is rather simple to prove that this problem is coNP-hard, since its computation
requires checking the equivalence of multiple pairs of Boolean circuits. It
is also rather easy to see that it is in \NPcoNP, as one can guess $f_c$ and
$M'$ in polynomial time and verify the solution using a polynomial amount of
calls to a \coNP\ oracle, one for every equivalence check.

The function $f_c$ could be composed as a binary function of the results
of $f_a$ and $f_b$, as it is intuitive to suppose that assuming such a fusion
is possible, then every node influenced by $a$ or $b$ should be computable
from such a composition of $a$ and $b$. The issue however is that this process
requires modifying every node influenced by $a$ or $b$ such that their output
functions match the output functions in $M$.

It is unclear that there should exist a method in \coNP\ to ensure this
modification such that the output functions are conserved.
This leads us to
believe that a greedy algorithm wouldn't prove the Optimal Module Synthesis
Problem to be in \FcoNP.

Similarly, it is interesting to consider the open question of whether or not the
Module Synthesis Problem can be proven \NPcoNP-hard. This implies to prove,
between other things, that the problem is \NP-hard. This is, to us, another
open problem as the Module Synthesis Problem does not seem equiped to compute
the satisfaction of a Boolean formula or circuit.

\subsection{Similarities to other optimisation problems}

This open question bears strong ressemblance to another open problem that
concerns Boolean circuits. The Circuit Minimisation Problem is a problem that
asks to provide a Boolean circuit below a given size such that it computes
a Boolean function given as a truth table as the input of the
problem~\cite{C-Kabanets2000}. The problem is trivially in \NP\ but it is not
known whether the problem is in $\text{P}$ or \NP-hard, as both possibilities
imply
proving other results which seem beyond the currently known techniques.
The same problem has been found to be \NP-complete in both restricted (DNFs) and
generalised (unrestricted Boolean circuits)
variations of the Boolean circuit model~\cite{C-Ilango2020}.

There are strong similarities between acyclic modules and Boolean circuits.
Both are defined on acyclic digraphs, have inputs and outputs,
and compute Boolean functions. It is important to note that this analogy is
misleading when talking about the optimisation of their size. Optimising
a Boolean circuit requires the optimisation of a Boolean function in terms
of the number of gates that computes it. Optimising an acyclic module, however,
requires the optimisation of a network of functions with respect to a notion
of delay of the inputs,
whereas in this case one node may contain an arbitrary Boolean function.
As such these problems seem too independent
to provide any reduction between them.

\subsection{Examples}

\begin{example}
\label{ex1-SYN}
Consider the output function $O_d$ defined in Example~\ref{ex1-OUT}.
Let us define $M'_A$ as the module defined on $S'_A = \{a, b, d\}$ and
$I_A = \{\alpha\}$, such that
$f''_a = x_{\alpha}$,
$f''_b = x_a$ and $f_d = \neg x_b$.
The module $M'_A$ is a valid answer to the instance $I_A, \{O_d\}, k = 3$ of
the Module Synthesis Problem.
The interaction digraph of this module is depicted in Figure~\ref{f:ex:MODSYN}
(left panel).
\end{example}

\begin{example}
\label{ex2-SYN}
Consider the output functions $O_B = \{O_{St}, O_{Sl}, O_{Sk}, O_{Pp}, O_C, O_{C^*}\}$
defined in Example~\ref{ex2-OUT}.
Let us define $M'_B$ as the module defined on
$S'_B = \{St, Sl, Sk,$ $ Pp, Ru, C25, C^*\}$ and
$I_B = \{\alpha_{St}, \alpha_{Sl}, \alpha_{Sk}, \alpha_{Pp},
         \alpha_C, \alpha_{C^*}\}$, such that
$f''_{St}(x) = \neg x_{\alpha_{St}}$,
$f''_{Sl}(x) = \neg x_{\alpha_{Sl}} \vee x_{\alpha_{C^*}}$,
$f''_{Sk}(x) = x_{\alpha_{St}} \vee \neg x_{\alpha_{Sk}}$,
$f''_{Pp}(x) = x_{\alpha_{Sl}} \vee \neg x_{\alpha_{Pp}}$,
$f''_{Ru}(x) = \neg x_{\alpha_{Sk}} \vee x_{\alpha_{Pp}}
                    \vee \neg x_{\alpha_C} \vee \neg x_{\alpha_{C^*}}$,
$f''_C(x) = \neg x_{Ru} \vee \neg x_{\alpha_{Sl}}$,
$f''_{C25}(x) = \neg x_{\alpha_{Pp}} \vee x_{\alpha_C}$,
and $f'_{C^*}(x) = x_{C25}$.
The module $M'_B$ is a valid answer to the instance $I_B, O_B, k = 8$ of
the Module Synthesis Problem.
The interaction digraph of this module is depicted in Figure~\ref{f:ex:MODSYN}
(right panel).
\end{example}

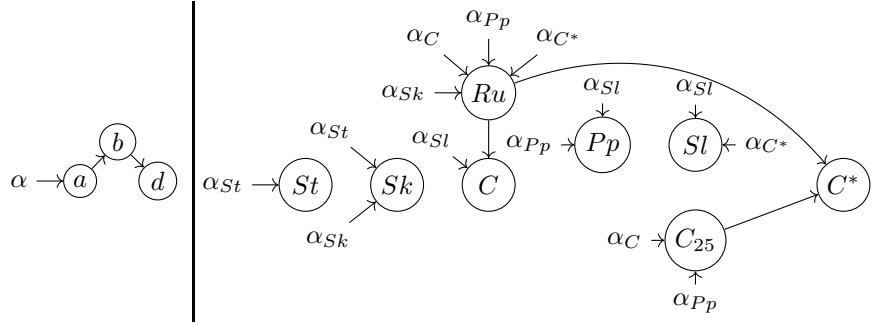
\begin{figure}[t!]
\begin{center}
\begin{minipage}{.2\textwidth}
\begin{tikzpicture} [->, node distance=0.15cm]

    %CENTER
        \node (CENTER) at (0, 0) {};

    %BAN
        \node[state, inner sep=2pt,minimum size=0] (A) [left = of CENTER] {$a$};
        \node[state, inner sep=2pt,minimum size=0] (B) [above= of CENTER] {$b$};
        \node[state, inner sep=2pt,minimum size=0] (D) [right= of CENTER] {$d$};

        \path (A) edge (B)
              (B) edge (D);

	    \node [left =10pt of A] (alpha1) {$\alpha$};
	    \draw (alpha1) -- (A);
		
\end{tikzpicture}
\end{minipage}
\vline
\begin{minipage}{.7\textwidth}
\begin{tikzpicture} [->, node distance=0.5cm]

    %CENTER
        \node (CENTER) at (0, 0) {};

    %BAN
        \node[state, inner sep=2pt,minimum size=20pt] (C) [left = of CENTER] {$C$};
        \node[state, inner sep=2pt,minimum size=20pt] (Sk) [left= of C] {$Sk$};
        \node[state, inner sep=2pt,minimum size=20pt] (St) [left= of Sk] {$St$};
        \node[state, inner sep=2pt,minimum size=20pt] (Ru) [above= of C] {$Ru$};
        \node[state, inner sep=2pt,minimum size=20pt] (Pp) [above right=0.2cm of CENTER] {$Pp$};
        \node[state, inner sep=2pt,minimum size=20pt] (C25) [below= of Sl] {$C_{25}$};
        \node[state, inner sep=2pt,minimum size=20pt] (Sl) [right= of Pp] {$Sl$};
        \node[state, inner sep=2pt,minimum size=20pt] (C*) [right=4cm of C] {$C^*$};

        \path (Ru) edge[bend left = 35] (C*)
                   edge (C)
              (C25) edge (C*);

         \node [left = 10pt of St] (Sta1) {$\alpha_{St}$};
         \draw (Sta1) -- (St);
         \node [above left = 10pt of Sk] (Ska1) {$\alpha_{St}$};
         \node [below left = 10pt of Sk] (Ska2) {$\alpha_{Sk}$};
         \draw (Ska1) -- (Sk);
         \draw (Ska2) -- (Sk);
         \node [above left = 10pt of Ru] (Rua1) {$\alpha_{C}$};
         \node [left = 10pt of Ru] (Rua2) {$\alpha_{Sk}$};
         \node [above right = 10pt of Ru] (Rua3) {$\alpha_{C^*}$};
         \node [above = 10pt of Ru] (Rua4) {$\alpha_{Pp}$};
         \draw (Rua1) -- (Ru);
         \draw (Rua2) -- (Ru);
         \draw (Rua3) -- (Ru);
         \draw (Rua4) -- (Ru);
         \node [above left = 5pt of C] (Ca1) {$\alpha_{Sl}$};
         \draw (Ca1) -- (C);
         \node [left = 5pt of Pp] (Ppa1) {$\alpha_{Pp}$};
         \node [above = 5pt of Pp] (Ppa2) {$\alpha_{Sl}$};
         \draw (Ppa1) -- (Pp);
         \draw (Ppa2) -- (Pp);
         \node [above = 5pt of Sl] (Sla1) {$\alpha_{Sl}$};
         \node [right = 5pt of Sl] (Sla2) {$\alpha_{C^*}$};
         \draw (Sla1) -- (Sl);
         \draw (Sla2) -- (Sl);
         \node [left = 5pt of C25] (C25a1) {$\alpha_{C}$};
         \node [below = 5pt of C25] (C25a2) {$\alpha_{Pp}$};
         \draw (C25a1) -- (C25);
         \draw (C25a2) -- (C25);
		
\end{tikzpicture}
\end{minipage}
\end{center}
\caption{
\label{f:ex:MODSYN}
On the left, the interaction digraph of $M'_A$, as described in Example~\ref{ex1-SYN}.
On the right, the interaction digraph of $M'_B$, as described in Example~\ref{ex2-SYN}.
}
\end{figure}

\section{Final wiring and analysis} \label{s:PartD}

The final step in the pipeline is simply to wire the module obtained in
Section~\ref{s:PartC} so that the obtained networks hold isomorphic attractors
to the input network. This is ensured by application of the following
result..

\begin{theorem}[\cite{C-Perrot2020}]
\label{th-limit-from-output}
	Let $M$ and $M'$ be two acyclic modules, with $T$ and $T'$ subsets of their nodes
	such that $|T| = |T'|$. If there exists $g$ a bijection from $I$ to $I'$ and
	$h$ a bijection from $T$ to $T'$ such that for
	every $s \in T$, $O_s$ and $O'_{h(s)}$ have same delay, and for every input
	sequence $j$ with length the delay of $O_s$,
	\[O_s(j) = O'_{h(s)}(j \circ g^{-1})\]
	then for any function $\omega : I \to T$, the networks
	$\circlearrowright_{\omega} M$ and $\circlearrowright_{h \circ \omega \circ g^{-1}} M'$ have
        isomorphic attractors (up to the renaming of automata given by $h$).
\end{theorem}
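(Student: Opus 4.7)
The plan is to reduce the dynamics on the attractors of a wired acyclic module to the iteration of its output functions, and then to exploit the equivalence between the output functions of $M$ and $M'$ to transport attractors from one network to the other via $h$.

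First I would establish the following identity: in the wired network $\circlearrowright_\omega M$, the state of any node $s \in T$ at a given time can be written as $O_s$ applied to the sequence of values taken by the nodes $\omega(\alpha)$, for $\alpha \in I$, over the preceding $d$ time steps, where $d$ denotes the delay of $O_s$. This follows by unrolling $d$ parallel updates of the acyclic module $M$: by construction, $O_s$ records exactly this unrolling with input values substituted at every intermediate step, and the wiring $\omega$ simply replaces the formal inputs by the corresponding node values at the appropriate times. An analogous identity holds in $\circlearrowright_{h \circ \omega \circ g^{-1}} M'$ for each $h(s) \in T'$, with the same delay by hypothesis.

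Next I would restrict attention to an attractor of $\circlearrowright_\omega M$. Because the trajectory on the attractor is eventually periodic, the identity above holds at every time step and yields a closed deterministic recurrence on the sequence of $T$-values, built entirely from the family $(O_s)_{s \in T}$ and $\omega$. Moreover, since the interaction digraph of $M$ is acyclic and the image of $\omega$ lies in $T$, the state of any node of $M$ outside $T$ on the attractor is itself recoverable, via a bounded sub-computation in $M$, from a bounded window of $T$-values. Hence the attractor is entirely determined by its projection onto $T$. A symmetric statement holds for $M'$ with $T'$ in place of $T$.

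Finally, the hypothesis $O_s(j) = O'_{h(s)}(j \circ g^{-1})$ says that, after relabelling inputs through $g$ and nodes through $h$, the two recurrences on $T$- and $T'$-values coincide. Sending each attractor configuration of $\circlearrowright_\omega M$ to the configuration of $\circlearrowright_{h \circ \omega \circ g^{-1}} M'$ whose $h(s)$-coordinate equals the $s$-coordinate for every $s \in T$, and extending to nodes outside $T'$ through the reconstruction of the previous paragraph, then gives a bijection between attractors that commutes with one parallel update. The hardest part, and the technical heart of the argument, is the passage from the local identity about output functions to a global statement about whole attractors: one has to show that on a periodic trajectory the full configuration is reconstructible from a bounded window of $T$-values, and that this reconstruction itself commutes with $h$. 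Once that is settled, the isomorphism is essentially bookkeeping with the two bijections $g$ and $h$.
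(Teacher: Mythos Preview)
The paper does not prove this theorem: it is imported from~\cite{C-Perrot2020} and used as a black box in Section~\ref{s:PartD}, so there is no in-paper proof to compare your attempt against.

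Your outline is nonetheless the natural shape of the argument and, as far as one can tell, matches the strategy of the cited source: on an attractor of $\circlearrowright_\omega M$, unroll $d$ steps to express each $s\in T$ as $O_s$ evaluated on the window of past $T$-values fed through $\omega$; note that acyclicity forces every node outside $T$ to be a function of that same window via its own output function; then transport the resulting closed recurrence on $T$-values to a recurrence on $T'$-values through $g$ and $h$, using the hypothesis $O_s(j)=O'_{h(s)}(j\circ g^{-1})$. You are right that the technical crux is checking that the reconstructed configurations on the $M'$ side form an actual trajectory of $\circlearrowright_{h\circ\omega\circ g^{-1}} M'$, i.e.\ that the extension to non-$T'$ coordinates commutes with one parallel update; this is where the equal-delay assumption and the definition of output functions as \emph{minimal}-delay objects are used. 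One point your sketch leaves implicit and that should be stated: you only construct a map in one direction, whereas ``isomorphic attractors'' requires a bijection preserving periods. This is obtained by running the same construction with $M$ and $M'$ swapped (replacing $g,h$ by $g^{-1},h^{-1}$) and checking the two maps are mutual inverses on recurring configurations; without that, you have not ruled out that distinct attractors on one side collapse to the same attractor on the other.
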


Applying this theorem to the current problem is simple: the module $M$ is the
module obtained in Section~\ref{s:PartA}, and the module $M'$ is the
module obtained in Section~\ref{s:PartC}. The set $T$ is the set of nodes
which are substituted by new inputs in the process described in
Section~\ref{s:PartA}.
The set $T'$ is the set of nodes in $M'$ which are considered as the output
of the module, for example when the module $M'$ is obtained as the result
of the application of the functional problem defined in Section~\ref{s:PartC}.

As modules $M$ and $M'$ are defined over the same set of inputs, the
bijection $g$ is the identity. The bijection $h$ is directly constructed so
that for all $s \in T$, $h(s)$ in $M'$ has an equivalent output function as
$s$ in $M$, which is always possible thanks to the careful structure of our
pipeline. It follows quite clearly that for any $s \in T$, and for
any input sequence $j$, $O_s(j) = O'_{h(s)}(j \circ g^{-1})$ holds, and
the theorem applies.

\begin{example}
\label{ex1-FIN}
Consider $M'_A$ of Example~\ref{ex1-SYN}. Let $\omega_A(\alpha) = d$. The AN
$\circlearrowright_{\omega_A} M'_A$
is defined over $S'_A = \{a, b, d\}$ such that
$f'''_a(x) = x_d$,
$f'''_b(x) = x_a$,
$f'''_d(x) = \neg x_b$.
The interaction digraph of this module is depicted in Figure~\ref{f:ex:FIN}
(left panel).
\end{example}

\begin{example}
\label{ex2-FIN}
Consider $M'_B$ of Example~\ref{ex2-SYN}. Let $\omega_B(\alpha_s) = s$,
for all $s \in X_B$.
The AN $\circlearrowright_{\omega_B} M'_B$
is defined over $S'_B = \{St, Sl, Sk, Pp, Ru, C25, C^*\}$ such that
$f'''_{St}(x) = \neg x_{St}$,
$f'''_{Sl}(x) = \neg x_{Sl} \vee x_{C^*}$,
$f'''_{Sk}(x) = x_{St} \vee \neg x_{Sk}$,
$f'''_{Pp}(x) = x_{Sl} \vee \neg x_{Pp}$,
$f'''_{Ru}(x) = \neg x_{Sk} \vee x_{Pp}
                    \vee \neg x_C \vee \neg x_{C^*}$,
$f'''_C(x) = \neg x_{Ru} \vee \neg x_{Sl}$,
$f'''_{C25}(x) = \neg x_{Pp} \vee x_C$,
and $f'_{C^*}(x) = x_{C25}$.
The interaction digraph of this module is depicted in Figure~\ref{f:ex:FIN}
(right panel).
\end{example}

This allows us to compute the attractors of any BAN by computing the dynamics
of another BAN with possibly less nodes, thus dividing the number of computed
configurations by some power of two. Examples throughout this paper showcase
the application of the pipeline over two initial examples.

Examples~\ref{ex1-BAN-def}, \ref{ex1-AM}, \ref{ex1-OUT}, \ref{ex1-SYN} and
\ref{ex1-FIN} show the optimisation of a simple four nodes network into a three
nodes equivalent network.
The optimisation proceeds here by \lq compacting\rq\ two
trivially equivalent nodes, $b$ and $c$, into one. The resulting BAN has
dynamics $2^1$ times smaller than the initial network, with isomorphic attractors.
Examples~\ref{ex2-BAN-def}, \ref{ex2-AM}, \ref{ex2-OUT}, \ref{ex2-SYN} and
\ref{ex1-FIN} show the optimisation of a larger, more intricate network which
is drawn from a model predicting the cell cycle sequence of fission
yeast~\cite{J-Davidich2008}. This practical example, processed through our
pipeline, reduces from $10$ nodes to $8$. This implies a reduction in dynamics
size of $2^2$, while keeping isomorphic attractors.
Both sets of examples are illustrated throughout the paper in
Figures~\ref{f:ex:BAN}, \ref{f:ex:MOD}, \ref{f:ex:MODSYN} and \ref{f:ex:FIN}.

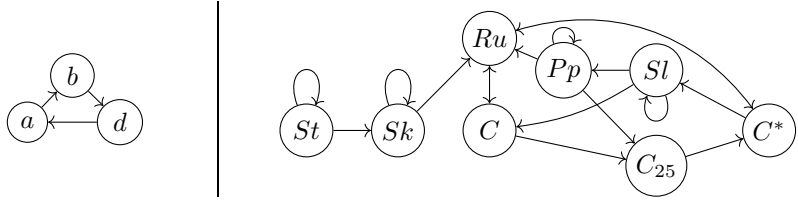
\begin{figure}[t!]
\begin{minipage}{.3\textwidth}
\begin{center}
\begin{tikzpicture} [->, node distance=0.2cm]

    %CENTER
        \node (CENTER) at (0, 0) {};

    %BAN
        \node[state, inner sep=3pt,minimum size=0] (A) [left = of CENTER] {$a$};
        \node[state, inner sep=3pt,minimum size=0] (B) [above= of CENTER] {$b$};
        \node[state, inner sep=3pt,minimum size=0] (D) [right= of CENTER] {$d$};

        \path (A) edge (B)
              (B) edge (D)
              (D) edge (A);
		
\end{tikzpicture}
\end{center}
\end{minipage}
\vline
\begin{minipage}{.7\textwidth}
\begin{center}
\begin{tikzpicture} [->, node distance=0.5cm]

    %CENTER
        \node (CENTER) at (0, 0) {};

    %BAN
        \node[state, inner sep=2pt,minimum size=20pt] (C) [left = of CENTER] {$C$};
        \node[state, inner sep=2pt,minimum size=20pt] (Sk) [left= of C] {$Sk$};
        \node[state, inner sep=2pt,minimum size=20pt] (St) [left= of Sk] {$St$};
        \node[state, inner sep=2pt,minimum size=20pt] (Ru) [above= of C] {$Ru$};
        \node[state, inner sep=2pt,minimum size=20pt] (Pp) [above=0.3cm of CENTER] {$Pp$};
        \node[state, inner sep=2pt,minimum size=20pt] (Sl) [right= of Pp] {$Sl$};
        \node[state, inner sep=2pt,minimum size=20pt] (C25) [below= of Sl] {$C_{25}$};
        \node[state, inner sep=2pt,minimum size=20pt] (C*) [right=3cm of C] {$C^*$};

        \path (St) edge (Sk)
              (Sk) edge (Ru)
              (Ru) edge[<->] (C)
                   edge[<->, bend left = 35] (C*)
              (C) edge (C25)
              (Pp) edge (Ru) edge (C25)
              (Sl) edge (Pp)
                   edge [bend left = 12] (C)
              (C25) edge (C*)
              (C*) edge (Sl);

        \draw (St) to [out=110, in=70, looseness=7] (St);
	    \draw (Sk) to [out=110, in=70, looseness=7] (Sk);
	    \draw (Pp) to [out=110, in=70, looseness=3] (Pp);
	    \draw (Sl) to [out=290, in=250, looseness=5] (Sl);
		
\end{tikzpicture}
\end{center}
\end{minipage}
\caption{
\label{f:ex:FIN}
On the left, the interaction digraph of $F'_A$, as described in Example~\ref{ex1-FIN}.
On the right, the interaction digraph of $F'_B$, as described in Example~\ref{ex2-FIN}.
}
\end{figure}

%%\subsection{Examples}

% On pourra executer la pipeline sur une sélection d'examples, qui auront été
% utilisés tout le long du papier.

\section{Conclusion}

The present paper showcases an innovative way of reducing the cost of computing
the attractors of Boolean automata networks.
The method provides better optimisation
on networks showing structural redundancies, which are removed by the pipeline.
The limitations of this method
are still significant; it requires solving a problem that is at
least \coNP-hard, and believed to be \FNPcoNP-complete.
As it presently stands, this method is not as much a convincing practical tool as it is
a good argument in favor of the powerfulness of acyclic modules, their
output functions, and the approaches they allow together towards the
computation of BAN dynamics.

Future perspectives include finding better complexity bounds to the
Module Synthesis Problem, and generalising the formalism of output functions 
and the optimisation pipeline to
different update schedules distinct from parallel.

%\subsubsection{Acknowledgements}
%This work received partial support from the ANR-18-CE40-0002 FANs
%project, the ECOS-Sud C16E01 project, and the STIC AmSud CoDANet 19-STIC-03
%(Campus France 43478PD) project.

\bibliographystyle{plain}
{\small{\bibliography{pipeline}}}

\begin{thebibliography}{10}

\bibitem{J-Aracena2008}
J.~Aracena.
\newblock {Maximum number of fixed points in regulatory Boolean networks}.
\newblock {\em Bull. Math. Biol.}, 70:1398--1409, 2008.

\bibitem{J-Aracena2017}
J.~Aracena, A.~Richard, and L.~Salinas.
\newblock {Number of fixed points and disjoint cycles in monotone Boolean
  networks}.
\newblock {\em SIAM J. Discr. Math.}, 31:1702--1725, 2017.

\bibitem{C-Bridoux2019}
F.~Bridoux, N.~Durbec, K.~Perrot, and A.~Richard.
\newblock {Complexity of maximum fixed point problem in Boolean Networks}.
\newblock In {\em Proc. of CiE'19}, volume 11558 of {\em LNCS}, pages 132--143.
  Springer, 2019.

\bibitem{U-Bridoux2020}
F.~Bridoux, C.~Gaze-Maillot, K.~Perrot, and S.~Sen{\'e}.
\newblock {Complexity of limit-cycle problems in Boolean networks}.
\newblock Submitted, arXiv:2001.07391, 2020.

\bibitem{J-Davidich2008}
M.~I. Davidich and S.~Bornholdt.
\newblock {Boolean network model predicts cell cycle sequence of fission
  yeast}.
\newblock {\em PLoS One}, 3:e1672, 2008.

\bibitem{J-Demongeot2010}
J.~Demongeot, E.~Goles, M.~Morvan, M.~Noual, and S.~Sen{\'e}.
\newblock {Attraction basins as gauges of robustness against boundary
  conditions in biological complex systems}.
\newblock {\em PLoS One}, 5:e11793, 2010.

\bibitem{J-Demongeot2012}
J.~Demongeot, M.~Noual, and S.~Sen{\'e}.
\newblock {Combinatorics of Boolean automata circuits dynamics}.
\newblock {\em Discr. Appl. Math.}, 160:398--415, 2012.

\bibitem{C-Orponen1989}
P.~Floreen and P.~Orponen.
\newblock {Counting stable states and sizes of attraction domains in Hopfield
  nets is hard}.
\newblock In {\em Proc. of IJCNN'89}, pages 395--399. IEEE, 1989.

\bibitem{J-Goles2008}
E.~Goles and L.~Salinas.
\newblock {Comparison between parallel and serial dynamics of Boolean
  networks}.
\newblock {\em Theor. Comput. Sci.}, 396:247--253, 2008.

\bibitem{C-Ilango2020}
R.~Ilango, B.~Loff, and I.~C. Oliveira.
\newblock {NP-hardness of circuit minimization for multi-output functions}.
\newblock In {\em Proc. of ECCC'20}, pages TR20--021, 2020.

\bibitem{C-Kabanets2000}
V.~Kabanets and J.~Cai.
\newblock {Circuit minimization problem}.
\newblock In {\em Proc. of STOC'00}, pages 73--79. ACM, 2000.

\bibitem{J-Kauffman1969}
S.~A. Kauffman.
\newblock {Metabolic stability and epigenesis in randomly constructed genetic
  nets}.
\newblock {\em J. Theor. Biol.}, 22:437--467, 1969.

\bibitem{J-Mendoza1998}
L.~Mendoza and E.~R. Alvarez-Buylla.
\newblock {Dynamics of the genetic regulatory network for \emph{Arabidopsis
  thaliana} flower morphogenesis}.
\newblock {\em J. Theor. Biol.}, 193:307--319, 1998.

\bibitem{C-Noual2012}
M.~Noual.
\newblock {Dynamics of circuits and intersecting circuits}.
\newblock In {\em Proc. of LATA'12}, volume 7183 of {\em LNCS}, pages 433--444.
  Springer, 2012.

\bibitem{C-Nous2020}
C.~No{\^u}s, K.~Perrot, S.~Sen{\'e}, and L.~Venturini.
\newblock {{\#P}-completeness of counting update digraphs, cacti, and a
  series-parallel decomposition method}.
\newblock In {\em Proc. of CiE'20}, 2020.
\newblock Accepted, arXiv:2004.02129.

\bibitem{C-Orponen1992}
P.~Orponen.
\newblock {Neural networks and complexity theory}.
\newblock In {\em Proc. of MFCS'92}, volume 629 of {\em LNCS}, pages 50--61.
  Springer, 1992.

\bibitem{C-Perrot2018}
K.~Perrot, P.~Perrotin, and S.~Sen{\'e}.
\newblock {A framework for (de)composing with Boolean automata networks}.
\newblock In {\em Proc. of MCU'18}, volume 10881 of {\em LNCS}, pages 121--136.
  Springer, 2018.

\bibitem{C-Perrot2020}
K.~Perrot, P.~Perrotin, and S.~Sen{\'e}.
\newblock {On the complexity of acyclic modules in automata networks}.
\newblock In {\em Proc. of TAMC'20}, 2020.
\newblock Accepted, arXiv:1910.07299.

\bibitem{B-Robert1986}
F.~Robert.
\newblock {\em {Discrete Iterations: A Metric Study}}.
\newblock Springer, 1986.

\bibitem{J-Thomas1973}
R.~Thomas.
\newblock {Boolean formalization of genetic control circuits}.
\newblock {\em J. Theor. Biol.}, 42:563--585, 1973.

\end{thebibliography}

%\appendix
%\include{demonstration}

\end{document}